\documentclass{ifacconf}  

\usepackage{natbib}  

\usepackage{comment}
\usepackage{amssymb}
\usepackage{amsmath,amssymb}
\usepackage{xcolor}
\usepackage{url}
\usepackage{amsfonts}
\usepackage{booktabs}
\usepackage{multirow} 
\usepackage{multicol}
\usepackage{makecell}
\usepackage{textcomp}

\newtheorem{remark}{Remark}

\newtheorem{proposition}{Proposition}

\usepackage{graphics} 
\usepackage{graphicx}

\usepackage[rightcaption]{sidecap}
\usepackage{algorithm}
\usepackage{algpseudocode}
\usepackage{float}
\usepackage{tabularx}
\usepackage{array}
\usepackage{wrapfig}
\usepackage{enumitem}
\usepackage{fancyhdr}
\fancypagestyle{preprint}{
  \fancyhf{} 
  
  \fancyhead[C]{\fontsize{10}{12}\selectfont This work has been submitted to IFAC for possible publication}
}

\newlist{Properties}{enumerate}{2}
\setlist[Properties]{label=Property \arabic*., itemindent=*}

\makeatletter
\DeclareFontFamily{U}{tipa}{}
\DeclareFontShape{U}{tipa}{m}{n}{<->tipa10}{}
\newcommand{\arc@char}{{\usefont{U}{tipa}{m}{n}\symbol{62}}}%

\newcommand{\arc}[1]{\mathpalette\arc@arc{#1}}

\newcommand{\arc@arc}[2]{%
  \sbox0{$\m@th#1#2$}%
  \vbox{
    \hbox{\resizebox{\wd0}{\height}{\arc@char}}
    \nointerlineskip
    \box0
  }%
}
\makeatother

\usepackage[utf8]{inputenc}

\begin{document}

\begin{frontmatter}

\title{Multi-UAV Path Following using Vector-Field Guidance} 

\author[First]{Gautam Kumar} 
\author[Second]{Amit Shivam} 
\author[Third]{Ashwini Ratnoo}

\address[First]{Department of Aerospace Engineering, Indian Institute of Science Bangalore, India 560012 (e-mail: gautamkumar1@iisc.ac.in)}
\address[Second]{Department of Electrical and Computer Engineering, University of Porto, Porto, Portugal,
		(e-mail: amitshivam1407@alum.iisc.ac.in)}
\address[Third]{Department of Aerospace Engineering, Indian Institute of Science Bangalore, India 560012, (e-mail: ratnoo@iisc.ac.in)}

\begin{abstract}  

This paper presents a decentralized, collision-free framework for path following guidance of multiple uncrewed aerial vehicles (UAVs), while maintaining uniform spacing along a reference path. A vector field-based guidance law is employed to drive each UAV toward the reference path. A rotational repulsion mechanism, utilizing relative distance and bearing between UAVs, is proposed to avoid collisions during convergence to the path, and an inter-UAV spacing error-based velocity control law is presented to achieve uniform separation along the path. Analytical guarantees are established for collision avoidance and for the convergence of the inter-UAV spacing errors to zero, ensuring uniform separation along the path. Numerical simulations demonstrate the efficacy of the proposed method.

\end{abstract}

\begin{keyword}
Decentralized control, collision avoidance, multi-UAV systems, and path-following
\end{keyword}
\end{frontmatter}

\thispagestyle{preprint}


\section{Introduction}

With the rapid deployment of uncrewed aerial vehicles (UAVs) in civilian and industrial applications, such as aerial surveillance, package delivery, and infrastructure inspection, there has emerged a need for safe and robust trajectory planning methods. In a structured common airspace, UAVs are often required to fly along predefined routes, necessitating the design of collision avoidance strategies. In such scenarios, maintaining a uniform inter-UAV separation along the desired path is equally important, as it facilitates deterministic UAV traffic flow, efficient airspace utilization, and simplified conflict resolution at intersections (\cite{nagrare2024intersection,he2025air}).

Achieving safe UAV operations and uniform inter-UAV separation necessitates coordinated control strategies that account for the inter-UAV interactions. In this context, formation-based approaches have been widely explored, as they enable structured motion through prescribed relative geometries among agents. A class of approaches considers circular or closed-path following, in which inter-UAV spacing and collision avoidance are maintained through phase-angle coordination along the desired closed path (\cite{li2021fully,mei2024enhanced}). However, in a structured airspace, the desired routes are often characterized by open paths, such as sinusoidal curves and straight line paths (\cite{quan2023distributed}). In these scenarios, the phase-angle-based approaches are not directly applicable as the open paths can no longer be parameterized using a continuous angular variable.

To address multi-UAV path-following along a designated non-closed path, various formation control strategies have been proposed in the literature. In \cite{ghommam2009coordinated}, a Serret-Frenet formulation-based path-following control law coupled with a speed adjustment mechanism is developed for UAVs tracking a common reference in a synchronized manner. That approach is centralized, which may limit its applicability to large swarms. A hybrid control strategy is proposed in \cite{chen2021coordinated}, in which the path-following error for an agent is shown to continuously decrease within its coordination set, defined by the agent's dynamics and motion constraints. The approaches in \cite{ghommam2009coordinated,chen2021coordinated} lack explicit treatment of inter-UAV collision avoidance guarantees. In \cite{prodan2010collision,chen2025distributed}, coordinated control strategies for following continuous curved paths ensure collision avoidance by incorporating inter-UAV distance constraints into Model Predictive Control frameworks. These approaches require solving high-dimensional optimization problems at each control step, thereby restricting utility for a large UAV swarm due to the high computational burden. While the approach in \cite{zuo2021coordinated} achieves the tracking of distinct parameterized paths for different agents and maintains temporal coordination between agents, it does not address convergence to a common path with uniform inter-agent spacing.

Vector field-based methods have been widely employed in multi-UAV systems for various applications, including traversal in obstacle-laden environments (\cite{hegde2016multi,pothen2017curvature}), evader capture (\cite{viet2012univector}), and patrolling missions (\cite{gao2018cooperative}), owing to their advantages of being computationally light and robust to external disturbances. The guiding vector field-based path following method in \cite{yao2021distributed} ensures path-following along a reference closed path for multiple robots but lacks explicit inter-UAV collision avoidance. Motivated by the need for scalable and collision-free path-following of a common path by multiple UAVs, this work presents a decentralized coordination strategy that enables multiple UAVs to converge to a common open path by leveraging an arcsine-based vector-field guidance law, while maintaining safe inter-UAV separation and uniform spacing along the path. The contributions of this work are as follows:
\begin{enumerate}
    \item A decentralized rotational repulsion mechanism is integrated within the vector field guidance framework, which enables UAVs to perform repulsive action in the presence of potential collisions, thereby facilitating collision avoidance.
    \item A longitudinal velocity control law, based on inter-UAV spacing error, is designed to achieve and maintain equal spacing along the desired path.
\end{enumerate}

The remainder of the paper is organized as follows: Section \ref{sec:prob} introduces the problem scenario. The proposed path following framework is discussed in Section \ref{sec:pathfollow}, followed by collision avoidance and uniform spacing between UAVs along the reference path in Section \ref{sec:collision_equispacing}. Simulation results are presented in Section \ref{sec:simulation}, followed by concluding remarks in Section \ref{sec:conclusion}.

\section{Problem Statement}\label{sec:prob}

Consider a team of $N$ UAVs operating in a planar environment. The position of the $i$th UAV is denoted by $\mathbf{x}_i$, where $\mathbf{x}_i = [x_i,\, y_i]^\top \in \mathbb{R}^2$, and its motion is governed by the kinematic model as
\begin{align}
    \dot{x}_i = v_i\cos(\psi_i),\qquad
    \dot{y}_i = v_i\sin(\psi_i),\qquad
    \dot{\psi}_i = \omega_i
\end{align}
where $\psi_i$ is the heading angle and $(v_i,\omega_i)$ are the linear and angular velocity input of the $i$th UAV. Let $\mathcal{P}\in \mathbb{R}^2$ be a desired reference path defined in the $xy$-plane. In this work, two representative paths are considered (Fig. \ref{fig:prob_statement}), 
\begin{align}
\text{(i)}~ \mathcal{P}_1 &: x = 0, \label{eq:path1} \\
\text{(ii)}~ \mathcal{P}_2 &: x = A \sin(k y), \label{eq:path2}
\end{align}
corresponds to straight line and sinusoidal curve paths, respectively. Here $A$ and $k$ are the amplitude and frequency parameters of path $\mathcal{P}_2$. The objective is to design a decentralized control law to deduce $(v_i,\omega_i)$ for each UAV such that:
\begin{enumerate}
    \item the UAVs converge to the reference path $\mathcal{P}_m\; (m = \{1,2\})$, 
    \begin{align}
        &\lim_{t \to \infty} \text{dist}(\mathbf{x}_i(t), \mathcal{P}_m) = 0
    \end{align}
    where $\text{dist}(\mathbf{x}_i, \mathcal{P}_m)$ denotes the minimum Euclidean distance between the $i$th UAV and $\mathcal{P}_m$.
    \item a minimum safety distance $d_{\mathrm{safe}}>0$ is maintained between UAVs, 
    \begin{align}
       & \|\mathbf{x}_i(t) - \mathbf{x}_j(t)\| > d_{\text{safe}},\\& \forall t\geq 0, \;\; i,j = 1,2,\ldots, N \text{ and } i\neq j.
    \end{align}
    \item the UAVs must maintain equal spacing of length $d_{\mathrm{eq}}$ along the path, 
    \begin{align}
        \lim_{t \to \infty}\big(s_i(t) - s_{i+1}(t) - d_{\text{eq}}\big) \to 0, \label{eq:equispacing}
    \end{align}
    where $s_i$ represents the arc length parameter of the $i$th UAV along the path.
\end{enumerate}
Further, the following assumptions have been considered in this work:
\begin{enumerate}[label = (A\arabic*)]
    \item The environment is obstacle-free.
    \item The path information is known to the UAVs.
    \item The UAVs can communicate their position and velocity information with each other.
\end{enumerate}
\begin{figure}[!hbt]
    \centering
    \includegraphics[width=0.75\linewidth]{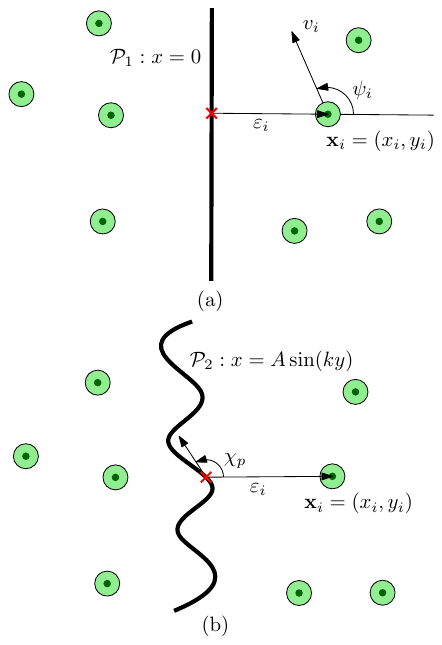}
    \caption{Path following scenario: (a) straight line path, and (b) sinusoidal path following.}
    \label{fig:prob_statement}
\end{figure}

\section{Path Following Methodology}\label{sec:pathfollow}

In this section, a vector field-based guidance methodology is discussed to guide an individual UAV along the desired reference paths in \eqref{eq:path1} and \eqref{eq:path2}.

The desired heading angle $\psi_i^{\mathrm{des}}$ for the $i$th UAV is composed of two terms (\cite{shivam2023arcsine}):
\begin{equation}\label{eq:psi_des}
\psi_i^{\mathrm{des}} =
\begin{cases}
\chi_i^p - \chi_i^o, & \varepsilon \le 0,\\[0.1em]
\chi_i^p + \chi_i^o, & \varepsilon > 0,
\end{cases}
\end{equation}
where $\chi_i^p $ is the local path-tangential direction, and $\chi_i^o$ is an arcsine-shaped offset function of the cross-track error $\varepsilon_i$, given by
\begin{align}
\chi_i^p = \tan^{-1}\!\left(\frac{\mathrm{d}y_i}{\mathrm{d}x_i}\right), ~ \chi_i^o =  \dfrac{\pi}{2} - \sin^{-1}\!\left(\frac{1}{1 + k_g \varepsilon_i^2}\right)\label{eq:chio_def} ,
\end{align}
where $k_g > 0$ is the guidance gain.

\subsection{Straight Line Path Following}
The desired path, in this case, is $\mathcal{P}_1$, where the cross-track error $\varepsilon_i= x_i$ (Fig. \ref{fig:prob_statement}a). The local path-tangential direction in \eqref{eq:chio_def}, $\chi_i^p = \dfrac{\pi}{2}$. Using \eqref{eq:psi_des} and \eqref{eq:chio_def}, the behavior of $\psi_i^{\mathrm{des}}$ as a function of $\varepsilon_i$ can be characterized as follows:
\begin{align}\label{eq:st_analysis}
   & \lim_{\varepsilon_i \to -\infty} \psi_i^{\mathrm{des}} = 0, \quad
     \lim_{\varepsilon_i \to \infty} \psi_i^{\mathrm{des}} = \pi, \\
   & \text{and} \quad \psi_i^{\mathrm{des}} = \frac{\pi}{2} \;\; \text{at} \;\; \varepsilon_i = 0.
\end{align}
Using \eqref{eq:st_analysis}, when the UAV is located far on the left side of the desired path $\mathcal{P}_1$, that is, $\varepsilon_i \ll 0$, the heading angle $\psi_i^{\mathrm{des}} \to 0$, which guides the UAV rightward toward the path. 
Conversely, when $\varepsilon_i \gg 0$, $\psi_i^{\mathrm{des}} \to \pi$, which commands the UAV a leftward motion towards $\mathcal{P}_1$. At the desired path, that is, $\varepsilon_i = 0$, the heading angle of the UAV aligns with the path tangent, that is, $\psi_i^{\mathrm{des}} = \pi/2$, ensuring straight motion along the reference line. The Lyapunov-based analysis in \cite{shivam2023arcsine} shows asymptotic convergence of the cross-track error $\varepsilon_i \to 0$ for any initial condition, thus guaranteeing convergence of the UAV to the desired path $\mathcal{P}_1$.

\subsection{Sinusoidal Path Following}

For the sinusoidal path scenario, the reference path $\mathcal{P}_2$ in \eqref{eq:path2} is considered. Here, the cross-track error is defined as the lateral deviation from the desired path $\mathcal{P}_2$, that is,
\begin{equation}
\varepsilon_i = x_i - A \sin(k y_i).
\end{equation}
The local path-tangential direction, $\chi_p$ is given by
\begin{align}\label{eq:chio_sinedef}
\chi_i^p &= \tan^{-1}\!\left(\frac{\mathrm{d}y_i}{\mathrm{d}x_i}\right) = \tan^{-1}\!\left(\frac{1}{Ak\cos(ky_i)}\right) 
\end{align}
The formulation of $\psi_i^{\mathrm{des}}$ using \eqref{eq:psi_des} and \eqref{eq:chio_def} ensures a smooth variation of the desired heading angle with respect to the cross-track error $\varepsilon_i$. Accordingly, the desired heading angle $\psi_i^{\mathrm{des}}$ satisfies
\begin{align}
   & \lim_{\varepsilon_i \to -\infty} \psi_i^{\mathrm{des}} = \chi_i^p - \frac{\pi}{2},\quad \lim_{\varepsilon_i \to \infty} \psi_i^{\mathrm{des}} = \chi_i^p + \frac{\pi}{2}, \text{ and}\label{eq:left_pos} \\
   &  \text{at} \;\; \varepsilon_i = 0,\; \psi_i^{\mathrm{des}} = \chi_i^p \;.
\end{align}
From \eqref{eq:left_pos}, when the UAV is far to the left of $\mathcal{P}_2$, that is, $\varepsilon_i \ll 0$, the desired heading angle guides the UAV to take a rightward turn toward the path. Conversely, when $\varepsilon_i \gg 0$, the UAV takes a leftward turn towards $\mathcal{P}_2$. Once the UAV reaches the desired path, that is, $\varepsilon_i = 0$, the heading direction aligns with the local path tangent $\chi_p$. The asymptotic convergence to the desired path $\mathcal{P}_2$ is theoretically shown using the Lyapunov analysis in \cite{shivam2024class}.

\begin{remark}\label{rem:pathomega}
The desired heading angle $\psi_i^{\mathrm{des}}$ obtained from the guidance law provides the reference heading angle for the $i$th UAV to align with the desired path. Accordingly, the angular velocity component that drives the agent to its desired path is obtained using a proportional control law.
\begin{equation}\label{eq:omega_control}
    \omega_i^{\mathrm{path}} = k_\psi \left( \psi_i^{\mathrm{des}} - \psi_i \right),
\end{equation}
where $k_{\psi}  > 0$ is the heading control gain. 
\end{remark}

\section{Collision Avoidance and Equispacing Control}
\label{sec:collision_equispacing}

This section presents a collision avoidance mechanism and a spacing error-based velocity control law that are incorporated into the path-following guidance framework discussed in Section \ref{sec:pathfollow}. Formal guarantees are established for avoiding collisions between UAVs and for convergence on inter-UAV spacing to the desired separation along the reference path.

\subsection{Collision Avoidance Mechanism}

The relative distance and bearing angle between the $i$th and $j$th UAV are defined, respectively, as
\begin{equation}
  d_{i,j} = \|\mathbf{x}_i - \mathbf{x}_j\|, \qquad
  \beta_{i,j} = \tan^{-1}\!\left(\frac{y_j - y_i}{x_j - x_i}\right).
\end{equation}
The rotational repulsion component of the angular velocity input of the $i$th UAV is formulated as
\begin{equation}\label{eq:omega_rep}
  \omega_i^{\mathrm{rep}} =
  \begin{cases}
    \displaystyle
    -k_r \sum_{j \in \mathcal{N}_i}
    \!\left(\frac{1}{d_{i,j}} - \frac{1}{R_{s}}\right)
    \sin\!\bigl(\beta_{i,j} - \psi_i\bigr),
    \;\text{if } d_{i,j} \le R_{s},\\
    0, \hspace{5.3cm} \text{otherwise,}
  \end{cases}
\end{equation}
where $R_s$ denotes the threshold safety distance, $\mathcal{N}_i$ is the set of neighboring UAVs within the circular region of radius $R_s$ centered at $\mathbf{x}_i$, and $k_r > 0$ is the repulsion gain. This repulsive angular component in \eqref{eq:omega_rep} is composed of two terms:
\begin{enumerate}
  \item The term $\left(\dfrac{1}{d_{i,j}} - \dfrac{1}{R_{s}}\right)$ is a distance-dependent weighting factor that controls the repulsive turning rate based on the proximity between the $i$th and $j$th UAVs. At $d_{i,j} = R_{s}$, $\omega_i^{\mathrm{rep}}=0$, and grows monotonically as $d_{i,j}$ decreases below $R_{s}$, generating an angular deviation away from the neighboring agent. This inverse-distance dependence is analogous to the gradient of classical potential-field repulsion functions (\cite{rimon1990exact,khatib1986real}).

  \item The factor $\sin(\beta_{i,j} - \psi_i)$ determines the direction of rotational repulsion, that is, whether to turn clockwise or counterclockwise, ensuring that the repulsive angular component guides the UAV away from its neighboring UAV.
\end{enumerate}
\begin{remark}
    Using Remark \ref{rem:pathomega} and \eqref{eq:omega_rep}, the total angular velocity input for the $i$th UAV is
\begin{equation}\label{eq:omega_total}
  \omega_i = \omega_i^{\mathrm{path}} + \omega_i^{\mathrm{rep}}.
\end{equation}
\end{remark}

The following proposition establishes the collision avoidance property of the repulsive angular component in \eqref{eq:omega_rep} under the equal-speed assumption $v_i = v_j = v$. Additionally, the repulsion component is assumed sufficiently large to dominate the total angular velocity input in the case of a potential collision.

\begin{proposition}\label{prop:collision}
  Consider two UAVs $i$ and $j$, both employing the rotational repulsion law \eqref{eq:omega_rep} with equal nominal speeds $v_i = v_j = v$.  Then there exists a finite gain $k_r^* > 0$ such that, for all $k_r \ge k_r^*$, the separation $d_{i,j}(t)$ does not decrease monotonically to zero, that is, whenever $d_{i,j} < R_s$ and $\dot{d}_{i,j} < 0$, the repulsion induces $\ddot{d}_{i,j} > 0$, decelerating the approach and driving $\dot{d}_{i,j}$ towards positive values before the UAVs can collide.
\end{proposition}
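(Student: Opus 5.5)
Write the relative kinematics in polar form about UAV $i$. Let $\mathbf{r}=\mathbf{x}_j-\mathbf{x}_i = d_{i,j}[\cos\beta_{i,j},\,\sin\beta_{i,j}]^\top$. Differentiating gives
\begin{align}
\dot d_{i,j} &= v\cos(\psi_j-\beta_{i,j}) - v\cos(\psi_i-\beta_{i,j}),\\
d_{i,j}\dot\beta_{i,j} &= v\sin(\psi_j-\beta_{i,j}) - v\sin(\psi_i-\beta_{i,j}).
\end{align}
Introduce the relative angles $\theta_i=\psi_i-\beta_{i,j}$ and $\theta_j=\psi_j-\beta_{j,i}=\psi_j-\beta_{i,j}-\pi$. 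These measure how far each UAV's heading is from pointing at the other. In these variables
\begin{equation}
\dot d_{i,j}=-v(\cos\theta_i+\cos\theta_j).
\end{equation}
Differentiating once more,
\begin{equation}
\ddot d_{i,j}=v\bigl(\sin\theta_i\,\dot\theta_i+\sin\theta_j\,\dot\theta_j\bigr), \qquad \dot\theta_i=\omega_i-\dot\beta_{i,j},\quad \dot\theta_j=\omega_j-\dot\beta_{i,j}.
\end{equation}

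Substitute $\omega_i=\omega_i^{\mathrm{path}}+\omega_i^{\mathrm{rep}}$ with $\omega_i^{\mathrm{rep}}=k_r g(d)\sin\theta_i$, where $g(d)=1/d-1/R_s>0$ for $d<R_s$. Note that $-\sin(\beta_{i,j}-\psi_i)=\sin\theta_i$, and symmetrically $\omega_j^{\mathrm{rep}}=k_r g(d)\sin\theta_j$. The second derivative then splits into a repulsive part and a bounded remainder:
\begin{equation}
\ddot d_{i,j}= v k_r g(d)\bigl(\sin^2\theta_i+\sin^2\theta_j\bigr) + v\bigl(\sin\theta_i\,\omega_i^{\mathrm{path}}+\sin\theta_j\,\omega_j^{\mathrm{path}}\bigr) - v\dot\beta_{i,j}(\sin\theta_i+\sin\theta_j).
\end{equation}
The remainder is controlled as follows. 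The term $\omega^{\mathrm{path}}$ is bounded by $k_\psi\pi$, or by a similar constant after angle wrapping. The term $|\dot\beta_{i,j}|\le 2v/d$ by the kinematics above. So the remainder is at most $C_1+C_2/d$, multiplied by $(|\sin\theta_i|+|\sin\theta_j|)$. Choosing $k_r$ large relative to these constants makes the repulsive term dominate. Because $g(d)\to\infty$ as $d\to0$ at the same $1/d$ rate as the $\dot\beta$ term, a single finite $k_r^*$ should suffice uniformly on $d<R_s-\delta$. Near $d=R_s$ one can either accept this or use the paper's standing assumption that the repulsion dominates in a potential collision.

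The main obstacle is the degenerate configuration $\sin\theta_i=\sin\theta_j=0$ with $\dot d<0$, i.e.\ an exact head-on approach $\theta_i=\theta_j=0$. There the repulsive term vanishes and $\ddot d$ gets no help at that instant. I would handle this in two steps. First, show that this configuration is not invariant: generically $\omega^{\mathrm{path}}$ or $\dot\beta$ moves $\theta_i,\theta_j$ off zero, and $\theta=0$ is an unstable equilibrium of the repulsive dynamics, since $\dot\theta_i\approx k_r g\sin\theta_i$ is expansive. Second, restrict the claim to the set where $\sin^2\theta_i+\sin^2\theta_j\ge\eta>0$, and argue that trajectories enter this set. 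On that set the inequality $k_r g(d)\eta > (C_1+C_2/d)\cdot 2$ gives $\ddot d>0$. This yields an explicit $k_r^*=\sup_{d<R_s}2(C_1+C_2/d)/(\eta g(d))$, which is finite for $d$ bounded away from $R_s$.

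Finally, to show the approach halts before collision, compare $\dot d$ with $\ddot d\gtrsim k_r v\eta/d$ for small $d$. Integrating $\dot d\,\ddot d$ gives an energy-type estimate: $\tfrac12\dot d^2$ must decrease by about $k_r v\eta\ln(d_0/d)$, which diverges as $d\to0$. Since $|\dot d|\le2v$, $\dot d$ must reach zero at some positive $d$ once $k_r\ge k_r^*$.
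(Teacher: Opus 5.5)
Your relative kinematics are correct, including $\dot d_{i,j}=-v(\cos\theta_i+\cos\theta_j)$ and the split of $\ddot d_{i,j}$. The proof breaks at the step where you replace $\dot\beta_{i,j}$ by the bound $|\dot\beta_{i,j}|\le 2v/d$.

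That bound turns a term that is quadratic in $(\sin\theta_i,\sin\theta_j)$ into one that is only linear in them. A linear remainder can never be dominated by the quadratic repulsion $vk_rg(d)(\sin^2\theta_i+\sin^2\theta_j)$ near the head-on set, whatever $k_r$ is. This is what forces your restriction to $\{\sin^2\theta_i+\sin^2\theta_j\ge\eta\}$, and you never discharge it:
\begin{enumerate}
\item You give no argument that trajectories enter this set, and your energy estimate needs them to stay in it for the whole approach.
\item Your $k_r^*=\sup_{d<R_s}2(C_1+C_2/d)/(\eta g(d))$ is infinite because $g(R_s)=0$. So you do not obtain a finite gain valid on all of $d<R_s$, which is what the proposition asserts.
\item The $\omega^{\mathrm{path}}$ term is genuinely linear in the sines, so no finite $k_r$ absorbs it near head-on. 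It must be removed by the standing assumption $\dot\psi_k\approx\omega_k^{\mathrm{rep}}$ throughout $d<R_s$, as the paper does from the outset, not only near $d=R_s$.
\end{enumerate}

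The missing step is to substitute the expression you already derived, $\dot\beta_{i,j}=-\frac{v}{d}(\sin\theta_i+\sin\theta_j)$, instead of bounding it. With $\omega^{\mathrm{path}}$ dropped, $\ddot d_{i,j}$ becomes a quadratic form in the sines. The paper uses $\phi_i=-\theta_i$ and $\sin\phi_j=\sin\theta_j$, and writes this form as $-\frac{v^2}{d}(\sin\theta_i+\sin\theta_j)^2+vk_rg(d)(\sin^2\theta_i+\sin^2\theta_j)$. It then uses $(\sin\theta_i+\sin\theta_j)^2\le 2(\sin^2\theta_i+\sin^2\theta_j)$ to get the geometry-independent gain $k_r>2vR_s/(R_s-d_{\mathrm{safe}})$. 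Only the exact head-on case is excluded, and no $\eta$ is needed.

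If you carry out the substitution in your variables, the first term actually comes out with a plus sign: $\frac{v^2}{d}(\sin\theta_i+\sin\theta_j)^2=d\,\dot\beta_{i,j}^2\ge 0$. This is because range along a straight relative trajectory is convex in time. So the paper's minus sign is a slip, and its $k_r^*$ is sufficient but not necessary: $\ddot d_{i,j}>0$ off the head-on set for every $k_r>0$.

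Your energy-type estimate is the right tool for the ``before collision'' clause, which the paper only argues heuristically via $d^*$. It still needs a lower bound on $\sin^2\theta_i+\sin^2\theta_j$ along the approach. No such bound can be uniform in the initial geometry, because under repulsion-dominated dynamics the head-on configuration $\theta_i=\theta_j=0$ is invariant.
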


\begin{pf}
The lead angles for the UAVs are defined as
\begin{equation}
  \phi_i \triangleq \beta_{i,j} - \psi_i, \qquad
  \phi_j \triangleq \beta_{i,j} - \psi_j.
\end{equation}
Under equal speeds $v_i = v_j = v$, the range and bearing angle dynamics are governed by
\begin{align}
  \dot{d}_{i,j} &= v\left(\cos\phi_j - \cos\phi_i\right),\;
  \dot{\beta}_{i,j}
  = \frac{v}{d_{i,j}}\left(\sin\phi_j - \sin\phi_i\right)\label{eq:betadot}.
\end{align}
In the scenario of potential collision, we have  $\dot{\psi}_k \approx \omega_k^{\mathrm{rep}}$ $(k = \{i,j\})$. Differentiating $\phi_i$ and $\phi_j$, and substituting \eqref{eq:betadot},
\begin{align}
  \dot{\phi}_i
    &= \frac{v}{d_{i,j}}\bigl(\sin\phi_j - \sin\phi_i\bigr)
       - \omega_i^{\mathrm{rep}}, \label{eq:phi_i_dot}\\[4pt]
  \dot{\phi}_j
    &= \frac{v}{d_{i,j}}\bigl(\sin\phi_j - \sin\phi_i\bigr)
       - \omega_j^{\mathrm{rep}}. \label{eq:phi_j_dot}
\end{align}
Since $\beta_{ji} = \beta_{i,j} + \pi$, it follows that
$\omega_j^{\mathrm{rep}} = -\omega_i^{\mathrm{rep}}$. Let $\Omega$ denote the common magnitude $ k_r\!\left(\frac{1}{d_{i,j}} - \frac{1}{R_s}\right)$ in $\omega_k^{\text{rep}}$ such that $\omega_i^{\mathrm{rep}} = -\Omega\sin\phi_i$ and
$\omega_j^{\mathrm{rep}} = \Omega\sin\phi_j$. Differentiating \eqref{eq:betadot},
\begin{equation}\label{eq:ddotd_start}
  \ddot{d}_{i,j}
  = v\bigl(\sin\phi_j\,\dot{\phi}_j - \sin\phi_i\,\dot{\phi}_i\bigr).
\end{equation}
Substituting \eqref{eq:phi_i_dot} and \eqref{eq:phi_j_dot} into \eqref{eq:ddotd_start},
\begin{align}
  \ddot{d}_{i,j}
  &= v\!\left[\sin\phi_j\!\left(\frac{v}{d_{i,j}}(\sin\phi_j\!-\!\sin\phi_i)
     - \Omega\sin\phi_j\right)\right.\nonumber\\
  &\qquad\left.
     - \sin\phi_i\!\left(\frac{v}{d_{i,j}}(\sin\phi_j\!-\!\sin\phi_i)
     + \Omega\sin\phi_i\right)\right]\nonumber\\[4pt]
  &=  \underbrace{-\frac{v^2}{d_{i,j}}(\sin\phi_j - \sin\phi_i)^2}_{\le\,0}
  + \underbrace{v\,\Omega\,\bigl(\sin^2\phi_i + \sin^2\phi_j\bigr)}_{\ge\,0}. \label{eq:final_ddotdij}
\end{align}
The first term is always non-positive. The second term is non-negative for $d_{i,j} \le R_s$ (since $\Omega \ge 0$). In the scenario where both UAVs are approaching each other such that neither of the UAVs' heading is exactly aligned with the line of sight, that is, $\sin\phi_i \ne 0$ or $\sin\phi_j \ne 0$, the second term is strictly positive. The critical separation $d^*_{i,j}$ at which $\ddot{d}_{i,j} =0$ is obtained from \eqref{eq:final_ddotdij} as
\begin{equation}\label{eq:dstar}
  d_{i,j}^* = R_s\!\left(1 - 
  \frac{v\,(\sin\phi_j - \sin\phi_i)^2}
       {k_r\,(\sin^2\phi_i + \sin^2\phi_j)}\right).
\end{equation}
For collision avoidance, the sign of $\dot{d}_{i,j}$ must become positive before $d_{i,j}$ reaches $d_{\mathrm{safe}}$. Since $\ddot{d}_{i,j}>0$ for $d_{i,j}<d_{i,j}^*$, the range rate $\dot{d}_{i,j}$ is increasing in that region. To ensure $d_{i,j}>d_{\mathrm{safe}}$, 
\begin{equation}\label{eq:kr_bound}
   k_r>k_r^* =
  \frac{v\, R_s}{R_s - d_{\mathrm{safe}}} \cdot
  \frac{(\sin\phi_j - \sin\phi_i)^2}{\sin^2\phi_i + \sin^2\phi_j}.
\end{equation}
For $k_r=k_r^*$, $\ddot{d}_{i,j} = 0 $ occurs at ${d}_{i,j}^* =  d_{\mathrm{safe}}$. Since $ \frac{(\sin\phi_j - \sin\phi_i)^2}{(\sin^2\phi_i + \sin^2\phi_j)}\leq 2$, a sufficient condition for $d_{i,j}^*>d_{\mathrm{safe}}$ for all geometries $(\phi_i,\phi_j)$ is $k_r>\frac{2v\, R_s}{R_s - d_{\mathrm{safe}}}$. Therefore, a sufficiently large gain $k_r$ exists such that $d_{i,j}^* > d_{\mathrm{safe}}$, which drives the range rate $\dot{d}_{i,j}$ to positive values, before $d_{i,j}=d_{\mathrm{safe}}$.
\end{pf}

\begin{remark}
  The equal-speed assumption $v_i = v_j = v$ is adopted for analytical tractability and holds once equispacing is achieved in accordance with Proposition \ref{prop:equispacing}, discussed subsequently in Section \ref{sec:eqspace}. During the transient phase in which the speeds of UAVs $i$ and $j$ differ, the repulsion gain $k_r$ must be chosen sufficiently large to compensate for the effect of the speed deviation $|v_i - v_{\mathrm{nom}}|$ and maintain the avoidance property.
\end{remark}

\subsection{Equispacing Velocity Control}\label{sec:eqspace}

To achieve a uniform spacing between the UAVs along the desired path, the longitudinal velocity of each UAV is adjusted based on the spacing error with respect to its immediate predecessor. The spacing error $\Delta_i$ for the $i$th UAV is defined as
\begin{equation}
  \Delta_i = s_i - s_{i-1} - d_{\mathrm{eq}}, \qquad i \ge 2,
\end{equation}
where $d_{\mathrm{eq}}$ and $s$ are the desired inter-UAV spacing and the path parameter, respectively. This predecessor-following spacing error follows the string topology studied in \cite{swaroop2002string}. In this work, the path parameter $s$ is defined as follows:
\begin{enumerate}
  \item For the straight line path $\mathcal{P}_1 : x = 0$,
        \begin{equation}
          s(y) = y.
        \end{equation}
  \item For the sinusoidal path $\mathcal{P}_2 : x = A\sin(ky)$,
        \begin{equation}
          s(y) = \int_{0}^{y}
          \sqrt{1 + \bigl(Ak\cos(k\tau)\bigr)^2}\,\mathrm{d}\tau.
        \end{equation}
\end{enumerate}
The longitudinal velocity control law is then expressed as
\begin{equation}\label{eq:v_control}
  v_i = v_{\mathrm{nom}} - \kappa\,\tanh(\Delta_i),
\end{equation}
where $\kappa \in (0,\,v_{\mathrm{nom}})$ is a spacing gain and $v_{\mathrm{nom}}$ is the nominal speed common to all UAVs, and the hyperbolic tangent function structure ensures 
\begin{equation}
    v_i \in (v_{\mathrm{nom}}-\kappa,\,
v_{\mathrm{nom}}+\kappa),
\end{equation}
which preserves a positive forward speed at all times. By design, the first UAV ($i=1$) has no predecessor and therefore maintains $v_1 = v_{\mathrm{nom}},\; \forall t\geq 0$.

\begin{proposition}\label{prop:equispacing}
  Consider the longitudinal velocity control law \eqref{eq:v_control} applied to UAVs $i = 2, \ldots, N$, with UAV~$1$ maintaining $v_1 = v_{\mathrm{nom}}$. Assume each UAV has converged to the desired path using path-following guidance in Section \ref{sec:pathfollow}, such that $\dot{s}_i = v_i$ for all $i$. Then the inter-UAV spacing errors $\Delta_i$ asymptotically
  converge to zero, that is,
  \begin{equation}
    \lim_{t \to \infty} \Delta_i(t) = 0, \qquad \forall\, i \ge 2.
  \end{equation}
\end{proposition}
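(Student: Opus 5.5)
The plan is to reduce the claim to a cascade of scalar differential equations in the spacing errors and prove convergence by induction along the string. Under the standing assumption $\dot{s}_i = v_i$, differentiating $\Delta_i = s_i - s_{i-1} - d_{\mathrm{eq}}$ gives $\dot{\Delta}_i = v_i - v_{i-1}$. Substituting \eqref{eq:v_control}, and using $v_1 = v_{\mathrm{nom}}$, yields
\begin{align}
  \dot{\Delta}_2 &= -\kappa\tanh(\Delta_2), \\
  \dot{\Delta}_i &= -\kappa\tanh(\Delta_i) + \kappa\tanh(\Delta_{i-1}), \qquad i \ge 3 .
\end{align}
The nominal speed cancels, so each error is driven only by its own saturated feedback and by the saturated error of its predecessor. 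This is the string (predecessor-following) structure mentioned before the proposition.

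\textbf{Base case ($i=2$).} Take $V_2 = \ln\cosh(\Delta_2)$, or simply $\tfrac{1}{2}\Delta_2^2$. Then
\begin{equation}
  \dot V_2 = -\kappa\tanh^2(\Delta_2) \le 0,
\end{equation}
with equality only at $\Delta_2 = 0$. The function $V_2$ is radially unbounded, and the right-hand side is scalar and odd, so the solution is monotone and cannot escape. It follows that $\Delta_2(t)\to 0$ for every initial condition.

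\textbf{Inductive step.} Assume $\Delta_{i-1}(t)\to 0$, so the input $u(t) = \tanh(\Delta_{i-1}(t))$ is continuous, satisfies $|u|<1$, and tends to zero.
\begin{enumerate}
  \item \emph{Boundedness.} Since $|\dot{\Delta}_i| \le 2\kappa$, $\Delta_i$ grows at most linearly. Hence it stays bounded on any finite interval, and solutions exist globally.
  \item \emph{Convergence.} Fix $\delta\in(0,1)$. Choose $T$ such that $|u(t)|<\delta$ for all $t\ge T$. With $V_i = \ln\cosh(\Delta_i)$,
  \begin{equation}
    \dot V_i = \kappa\tanh(\Delta_i)\bigl(u - \tanh(\Delta_i)\bigr).
  \end{equation}
  For $t\ge T$ this is at most $-\kappa\,|\tanh\Delta_i|\,(|\tanh\Delta_i|-\delta) < 0$ whenever $|\tanh\Delta_i| > \delta$. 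Outside the set $\{|\Delta_i| \le \tanh^{-1}\delta\}$, the magnitude $|\Delta_i|$ therefore decreases at a rate bounded below by a positive constant. So the trajectory enters this set in finite time and remains in it thereafter.
  \item \emph{Conclusion.} This gives $\limsup_{t\to\infty}|\Delta_i(t)| \le \tanh^{-1}\delta$. Since $\delta$ is arbitrary, $\Delta_i\to 0$.
\end{enumerate}
Induction over $i = 2,\ldots,N$ completes the argument.

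\textbf{Main obstacle.} The delicate step is the inductive one. Because of the $\tanh$ saturation, the cascade is not exponentially stable. Moreover, while $|u|$ is close to $1$, the predecessor's term can momentarily dominate and push $\Delta_i$ away from zero. A naive linear or ISS-gain argument therefore does not apply directly. The proposed remedy is to split time at $T$:
\begin{enumerate}
  \item before $T$, use only the crude linear-growth bound to rule out escape;
  \item after $T$, use the ultimate-boundedness estimate with the arbitrarily small threshold $\delta<1$.
\end{enumerate}
It is essential here that the input satisfies $|u|<1$ strictly, since this is what lets $\tanh(\Delta_i)$ eventually exceed the input magnitude. A Barbalat-type alternative could also be used, but the threshold argument above seems cleanest.
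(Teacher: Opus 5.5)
Your argument is correct, but it takes a different route from the paper. The paper uses the single composite function $V=\tfrac12\sum_{i=2}^N\Delta_i^2$ and computes
$\dot V=-\kappa\sum_{i=2}^N\Delta_i\bigl(\tanh\Delta_i-\tanh\Delta_{i-1}\bigr)$.
It then applies LaSalle, asserting that each summand is non-negative because $\tanh$ is increasing. You instead use the lower-triangular (cascade) structure: you prove a converging-input/converging-state property for one scalar saturated system and then induct along the string.

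\textbf{What your route buys.} You never need to control the cross terms $\Delta_i\tanh\Delta_{i-1}$, and that is exactly where the paper's termwise claim fails. Monotonicity only gives $(\Delta_i-\Delta_{i-1})(\tanh\Delta_i-\tanh\Delta_{i-1})\ge 0$. With $\Delta_{i-1}=2$ and $\Delta_i=1$, the summand $\Delta_i(\tanh\Delta_i-\tanh\Delta_{i-1})$ is negative.

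\textbf{The composite route can be repaired.} The full sum is still non-negative. Sum the convexity bound $x\tanh y\le \ln\cosh x-\ln\cosh y+y\tanh y$ over $i\ge 3$ (with $x=\Delta_i$, $y=\Delta_{i-1}$) and telescope, using $\Delta_1=0$:
\[
-\dot V/\kappa\;\ge\;\ln\cosh\Delta_2+\bigl(\Delta_N\tanh\Delta_N-\ln\cosh\Delta_N\bigr)\;\ge\;0 .
\]
Equality forces $\Delta_2=0$ and $\Delta_i=\Delta_{i-1}$ for all $i$, so $\dot V<0$ away from the origin. The paper does not supply this argument. Once repaired, the composite route gives a single global certificate: $\|\Delta(t)\|$ is non-increasing, so you get Lyapunov stability and a transient bound in one step. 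Your induction delivers attractivity, which is all the proposition claims, and it extends easily to other predecessor-following couplings.

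\textbf{Minor fixes to your write-up.}
\begin{enumerate}
  \item If you only have $|u(t)|<\delta$ for $t\ge T$, the decrease rate $\kappa(|\tanh\Delta_i|-|u|)$ of $|\Delta_i|$ is not bounded away from zero near the boundary of $\{|\Delta_i|\le\tanh^{-1}\delta\}$. Choose $T$ with $|u|<\delta/2$ instead, which is possible since $u\to 0$; this gives the uniform rate $\kappa\delta/2$.
  \item The identity $\dot V_2=-\kappa\tanh^2\Delta_2$ holds for $V_2=\ln\cosh\Delta_2$, not for $\tfrac12\Delta_2^2$.
  \item What the argument really uses is $u\to 0$ together with the boundedness of the right-hand side, not strictness of $|u|<1$.
\end{enumerate}
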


\begin{pf}
Consider the Lyapunov candidate
\begin{equation}\label{eq:lya_Can}
  V = \frac{1}{2}\sum_{i=2}^{N} \Delta_i^2 \ge 0.
\end{equation}
The time derivative of $V$ is
\begin{equation}\label{eq:lya_der1}
  \dot{V} = \sum_{i=2}^{N} \Delta_i\,\dot{\Delta}_i
           = \sum_{i=2}^{N} \Delta_i\,(\dot{s}_i - \dot{s}_{i-1}).
\end{equation}
Under the assumption $\dot{s}_i = v_i$, substituting
\eqref{eq:v_control} gives
\begin{align}
  \dot{V}
  &= \sum_{i=2}^{N} \Delta_i\,(v_i - v_{i-1}) \nonumber\\
  &= -\kappa\sum_{i=2}^{N}
     \Delta_i\bigl(\tanh(\Delta_i) - \tanh(\Delta_{i-1})\bigr),
     \label{eq:Vdot_final}
\end{align}
where $\Delta_1 = 0$ indicates that the first UAV's speed is $v_{\mathrm{nom}}$. Since $\tanh(\cdot)$ is monotonically increasing, each term  $\Delta_i\bigl(\tanh(\Delta_i)-\tanh(\Delta_{i-1})\bigr)$ is non-negative, which gives $\dot{V} \le 0$. By LaSalle's invariance principle (\cite{khalil2002nonlinear}), all trajectories converge to the largest invariant subset of  $\{\dot{V} = 0\}$. Since each term in \eqref{eq:Vdot_final} is non-negative, $\dot{V} = 0$ implies
\begin{equation}\label{eq:lassle}
  \Delta_i\bigl(\tanh(\Delta_i) - \tanh(\Delta_{i-1})\bigr) = 0, 
  \qquad \forall\, i \ge 2.
\end{equation}
For $i = 2$, using $\Delta_1 = 0$, this reduces to
\begin{equation}
  \Delta_2\,\tanh(\Delta_2) = 0 \implies \Delta_2 = 0
\end{equation}
as $x\tanh(x) = 0$ iff $x = 0$. Assuming $\Delta_{i-1} = 0$  has been established, Eq. \ref{eq:lassle} reduces to $\Delta_i\tanh(\Delta_i) = 0$, which similarly implies $\Delta_i = 0$. Hence, proceeding recursively, $\Delta_i = 0$ for 
all $i \ge 2$. Therefore, the largest invariant set in $\{\dot{V} = 0\}$ is $\{\Delta_i = 0,\; \forall\, i \ge 2\}$, and asymptotic convergence of all spacing errors to zero follows from LaSalle's invariance principle.
\end{pf}

\begin{remark}
  The path-parameter rate $\dot{s}_i$ equals the component of velocity tangent to the path, that is, $\dot{s}_i = v_i\cos(\psi_i - \chi_i^p)$, which reduces to $v_i$ when $\mathbf{x}_i\in \mathcal{P}_m\;(m = \{1,2\})$. Therefore, the asymptotic convergence to the path established in Section \ref{sec:pathfollow} justifies the assumption $\dot{s}_i=v_i$ used in Proposition \ref{prop:equispacing}.
\end{remark}

\section{Simulation Results}\label{sec:simulation}

In this section, the proposed method is validated through numerical simulations in MATLAB for straight-path and sinusoidal-curve following. The initial UAV positions are uniformly sampled within a rectangular region bounded by the lines $y=-20 \text{ m}, y=20 \text{ m}, x=20\text{ m}, x=-20 \text{ m}$. The safety distance of the UAVs and activation radius in \eqref{eq:omega_rep} are considered to be $d_{\text{safe}}=0.4$ m and $R_s =1.5 $ m, respectively. The nominal speed $v_{\text{nom}} = 3$ m/s and $\kappa = 1$ m/s, and the gain values $k_r = 11, k_g=0.05, k_{\psi} = 2.3$. The desired uniform spacing $d_{eq}$ is considered as 4 m. Further, we define the inter-UAV distance parameter $\mathcal{E}$ as the minimum distance between any UAV pair at a given time, that is, $\mathcal{E} = \min_{i\neq j} d_{i,j} $.

\begin{figure}[!hbt]
    \centering
    \includegraphics[width=\linewidth]{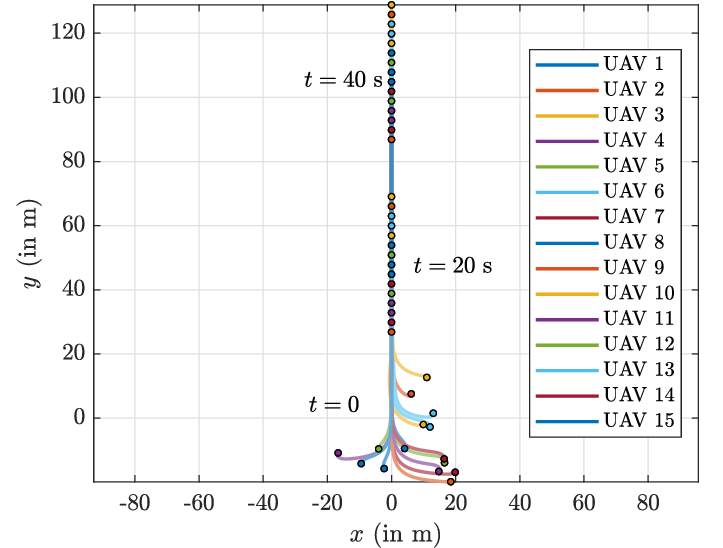}
    \caption{UAV trajectories for straight line path following (U$_i$ represents the plot for the $i$th UAV).}
    \label{fig:st_traj}
\end{figure}

\begin{figure}[!hbt]
    \centering
    \includegraphics[width=\linewidth]{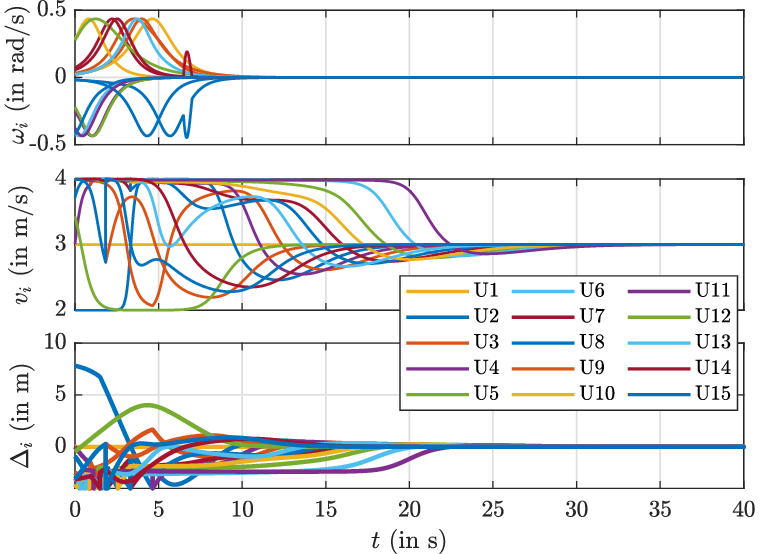}
    \caption{UAV control input and spacing error for straight line path following: angular velocity input (top), linear velocity (middle), and spacing error (bottom) .}
    \label{fig:st_vel}
\end{figure}

\begin{figure}
    \centering
    \includegraphics[width=\linewidth]{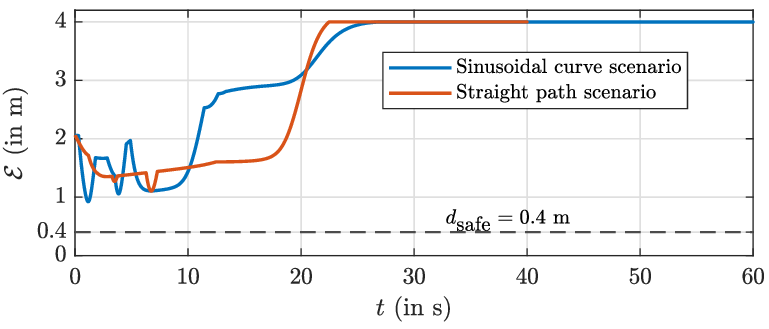}
    \caption{Inter-UAV distance parameter $\mathcal{E}$ for straight line path and sinusoidal curve following.}
    \label{fig:int_uav}
\end{figure}

For the first scenario, straight line path-following is considered $(\mathcal{P}_1: x=0)$. The UAV trajectories are depicted in Fig. \ref{fig:st_traj}, which shows UAV positions at $t = 0, 20$ s and $40$ s. Fig. \ref{fig:st_vel} shows the angular and linear velocity profiles of different UAVs, and the spacing error between consecutive UAVs. The angular velocity for all UAVs converges to 0, while the linear velocity converges to $v_{\text{nom}} = 3$ m/s, while remaining within the bounds $(v_{\text{nom}}-\kappa,v_{\text{nom}}+\kappa)=(2,4)$ m/s. Since $\mathcal{E}>d_{\mathrm{safe}} = 0.4$ m at all times, as shown in Fig. \ref{fig:int_uav}, no collision occurs between UAVs.

\begin{figure}[!hbt]
    \centering
    \includegraphics[width=\linewidth]{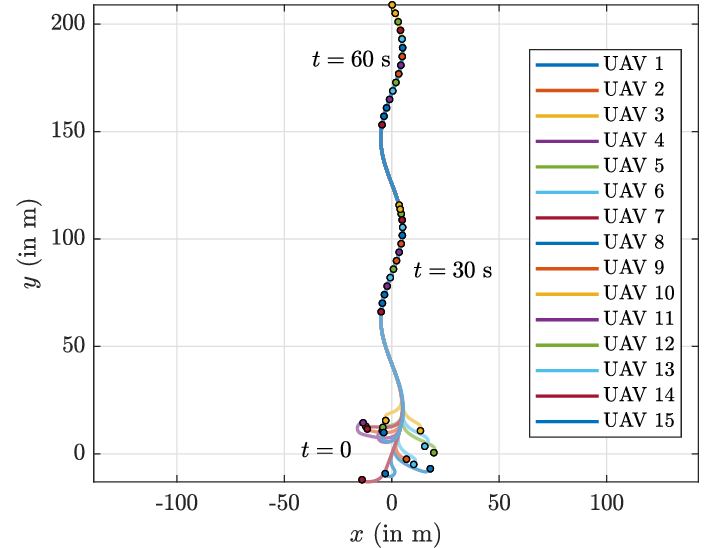}
    \caption{UAV trajectories for sinusoidal curve following.}
    \label{fig:sine_traj}
\end{figure}

In another scenario, illustrated in Fig. \ref{fig:sine_traj}, the UAVs follow a sinusoidal path $\mathcal{P}_2: x = A\sin(ky)$, where $A = 5$ and $k = 0.075$. The angular velocity profile in Fig. \ref{fig:sine_vel} shows a sinusoidal pattern for all UAVs as they converge to the path $\mathcal{P}_2$. Meanwhile, the linear velocity converges to $v_{\text{nom}}$ while remaining within the bounds of $(2, 4)$ m/s. Furthermore, the convergence of the spacing error to zero, as shown in Fig. \ref{fig:sine_vel}, indicates that uniform spacing is achieved among consecutive UAVs. Fig. \ref{fig:int_uav} shows that $\mathcal{E}>d_{\mathrm{safe}}$, indicating no collision occurs.
\begin{figure}[!hbt]
    \centering
    \includegraphics[width=\linewidth]{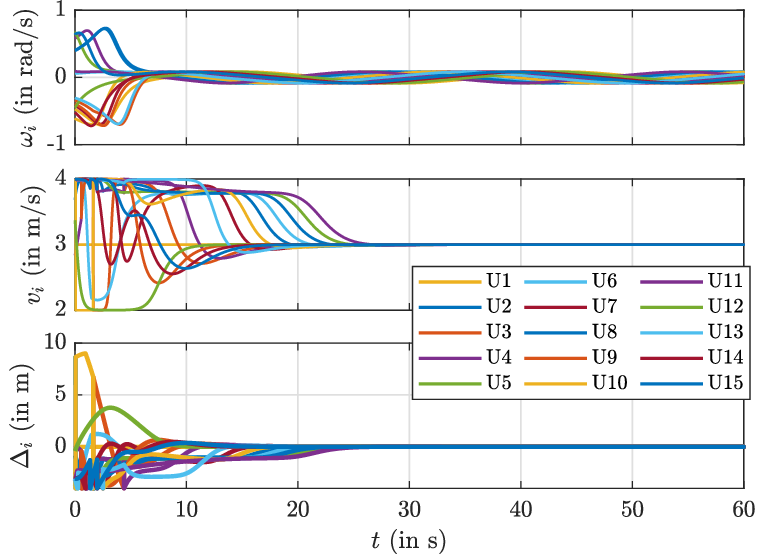}
    \caption{UAV control input and spacing error for sinusoidal curve following: angular velocity input (top), linear velocity (middle), and spacing error (bottom).}
    \label{fig:sine_vel}
\end{figure}

\section{Conclusion}\label{sec:conclusion}
This paper presents a collision avoidance strategy for multiple UAVs following an open reference path while maintaining equal spacing along it. Two types of reference paths, straight line and sinusoidal curve, are considered, and a vector-field guidance law based on an arcsine function of the cross-track error drives each UAV to its reference path. Collision avoidance is achieved using a rotational repulsion mechanism that drives each UAV away from its neighbors based on relative bearing and proximity. The uniform spacing between UAVs along the path is maintained using a linear velocity control law that regulates each UAV's speed based on the spacing error. Numerical simulations with 15 UAVs are performed to show the effectiveness of the proposed method. Future work could explore path-following for more complex open and closed paths. Benchmarking against existing potential function-based methods remains another direction for exploration.

\bibliography{sample}

\end{document}